\documentclass[a4paper,UKenglish,cleveref, autoref, thm-restate]{lipics-v2021}


\DeclareMathOperator*{\R}{\mathbb{R}}

\newtheorem{fact}[theorem]{Fact}

\bibliographystyle{plainurl}

\title{A Local Search Algorithm for the Min-Sum Submodular Cover Problem} 

\titlerunning{Local Search for Min-Sum Submodular Cover} 

\author{Lisa Hellerstein}{Department of Computer Science and Engineering, NYU Tandon, New York, USA \and \url{https://cse.engineering.nyu.edu/~hstein/} }{lisa.hellerstein@nyu.edu}{https://orcid.org/
0000-0002-3743-7965}{}

\author{Thomas Lidbetter}{Department of Engineering Systems and Environment, University of Virginia, Charlottesville, Virginia, USA \\ Department of Management Science and Information Systems, Rutgers Business School, Newark, New Jersey, USA \and \url{https://engineering.virginia.edu/faculty/thomas-lidbetter}}{tlidbetter@virginia.edu}{https://orcid.org/0000-0001-6111-2899}{}

\author{R. Teal Witter}{Department of Computer Science and Engineering, NYU Tandon, New York, USA \and \url{https://www.rtealwitter.com} }{rtealwitter@nyu.edu}{https://orcid.org/0000-0003-3096-3767}{}

\authorrunning{L. Hellerstein, T. Lidbetter, and R.\,T. Witter} 

\Copyright{Lisa Hellerstein, Thomas Lidbetter, and R. Teal Witter} 

\begin{CCSXML}
<ccs2012>
   <concept>
       <concept_id>10003752.10003809.10003636.10003812</concept_id>
       <concept_desc>Theory of computation~Facility location and clustering</concept_desc>
       <concept_significance>300</concept_significance>
       </concept>
 </ccs2012>
\end{CCSXML}

\ccsdesc[300]{Theory of computation~Facility location and clustering}


\keywords{Local search, submodularity, second-order supermodularity, min-sum set cover} 

\category{} 

\relatedversion{} 

\supplement{}

\funding{Lisa Hellerstein was supported in part by NSF Award IIS-1909335.
Thomas Lidbetter was supported in part by NSF Award IIS-1909446.}

\acknowledgements{We would like to thank Christopher Musco for pointing out that the set function induced by entropy on continuous domains is not submodular.}

\nolinenumbers 


\EventEditors{Sang Won Bae and Heejin Park}
\EventNoEds{2}
\EventLongTitle{33rd International Symposium on Algorithms and Computation (ISAAC 2022)}
\EventShortTitle{ISAAC 2022}
\EventAcronym{ISAAC}
\EventYear{2022}
\EventDate{December 19--21, 2022}
\EventLocation{Seoul, Korea}
\EventLogo{}
\SeriesVolume{248}
\ArticleNo{13}

\begin{document}

\maketitle

\begin{abstract}
We consider the problem of solving the Min-Sum Submodular Cover problem using local search. The Min-Sum Submodular Cover problem generalizes the NP-complete Min-Sum Set Cover problem, replacing the input set cover instance with a monotone submodular set function. A simple greedy algorithm achieves an approximation factor of 4, which is tight unless P=NP [Streeter and Golovin, NeurIPS, 2008]. We complement the greedy algorithm with analysis of a local search algorithm. Building on work of Munagala et al. [ICDT, 2005], we show that, using simple initialization, a straightforward local search algorithm achieves a $(4+\epsilon)$-approximate solution in time $O(n^3\log(n/\epsilon))$, provided that the monotone submodular set function is also second-order supermodular. Second-order supermodularity has been shown to hold for a number of submodular functions of practical interest, including functions associated with set cover, matching, and facility location. We present experiments on two special cases of Min-Sum Submodular Cover and find that the local search algorithm can outperform the greedy algorithm on small data sets.
\end{abstract}

\section{Introduction}


We consider the Min-Sum Submodular Cover problem, defined as follows.
The input to the problem consists of an oracle for a monotone submodular function $u:2^{[n]} \rightarrow \R_{\geq 0}$, and positive costs $c_1, \ldots, c_n \in \R_{>0}$, where $[n] = \{1, \ldots, n\}$.
Let $c:2^{[n]} \rightarrow \R_{\geq 0}$ be such that for all $S \subseteq [n]$, $c(S) = \sum_{i \in S} c_i$.
We refer to $u$ as the ``utility'' and $c$ as the ``cost'' function.
The problem is to find
the permutation of the elements of $[n]$ that minimizes 
\begin{align}\label{eq:mspp}
  \sum_{i=1}^{n} c(S_i) \left(u(S_i)-u(S_{i-1})\right) 
\end{align}
where $S_i$ is the set containing the first $i$ elements of the permutation.
The Min-Sum Submodular Cover problem
generalizes the NP-Complete Min-Sum Set Cover problem introduced by Feige et al.~\cite{feige2004approximating}.
It has a simple greedy algorithm that achieves a 4-approximation~\cite{iwataetal12,streetergolovin}.\footnote{See the Appendix \ref{app:prior} for comments on Streeter and Golovin ~\cite{streetergolovin}.}  The 4-approximation is tight, assuming $P \neq NP$~\cite{feige2004approximating}.

In this work, we analyze a local search
algorithm for Min-Sum Submodular Cover.
Local search algorithms have been extensively applied
to discrete optimization problems \cite{antunes_et_al:LIPIcs:2017:7829, kutiel_et_al:LIPIcs:2017:7821, li2014deeper}
and offer several benefits over other types
of algorithms \cite{aarts2003local}.
One advantage of local search algorithms compared
to greedy methods in practice is their ability to explore a diverse
set of solutions.
In Section~\ref{sec:experiments}, we present the results of preliminary experiments which demonstrate that this ability can yield improved solutions to Min-Sum Submodular Cover instances.  

The local search algorithm we consider works 
in iterations, starting with an initial solution. 
In an iteration, the algorithm updates the current
solution to its best ``neighbor''.
Once a solution is locally optimal (or after
a fixed number of iterations),
the algorithm returns the current solution.

Our analysis builds on previous work of 
Munagala et al.~\cite{munagala2005pipelined}
for the Pipelined
Set Cover problem (defined below).
Munagala et al.\
exploit the observation that utility can be attributed
to elements of the ground set for the set cover instance, and use these elements
as variables in their linear program.
The main challenge of generalizing their analysis
is that the utility in general submodular set
functions is more abstract and cannot be attributed
to particular objects.
In order to apply the linear program used
in Munagala et al.,
our analysis  relies on an additional
property of the utility function called second-order 
supermodularity.
We leave as an open question whether the algorithm
gives a $(4+\epsilon)$-approximation
even without second-order supermodularity.

Second-order supermodularity was first studied
by Korula et al.~\cite{korula2018online}.
It can be viewed as
a natural extension of submodularity:
If one considers the multilinear extension
$F:[0,1]^n \rightarrow \R_{\geq 0}$ of
a set function $f:\{0,1\}^n \rightarrow \R_{\geq 0}$
(which is a way of interpolating the values of $f$
from the vertices of the Boolean hypercube to points in its interior),
the submodularity of $f$ is equivalent
to the property that the second partial
derivatives of $F$ are non-positive.
As mentioned by Korula et al.~\cite{korula2018online}
and Iyer et al.~\cite{iyer2021generalized},
the second-order supermodularity of $f$
is equivalent to the property that the
third partial derivatives of $F$ are non-negative.

The second-order supermodularity property is not overly restrictive;
there are several classes of submodular functions that
have this property including
weighted coverage functions,
weighted matching functions, and facility location \cite{korula2018online, iyer2021submodular, iyer2021generalized}.
Since this property was first defined, improved bounds  have been obtained for optimization problems by assuming the property
\cite{korula2018online, iyer2021submodular}.
The related properties of second-order modularity and second-order submodularity
have also been used in analyzing 
local search algorithms for constrained submodular 
maximization problems
\cite{ghadiri2019beyond, ghadiri2020parameterized}.

We now define two special cases of Min-Sum Submodular Cover: the
Pipelined Set Cover problem and the Min-Sum Facility Location problem.

\paragraph*{Pipelined Set Cover}
The inputs to the problem consist of
(i) $m$ ``ground'' elements $\{1,\ldots,m\}=[m]$,
(ii) $D_1, \ldots, D_n$, a family of $n$ subsets of the ground
elements $[m]$ such that $\bigcup_{i \in [n]} D_i = [m]$, and
(iii) positive costs $c_1, \ldots, c_n$ associated 
with each $D_i$.
Let $u:2^{[n]} \rightarrow \R_{\geq 0}$ be such that for all
$S \subseteq [n]$, $u(S)$ is the number 
of ground elements in $\bigcup_{i \in S} D_i$.
We call $u$ a ``coverage'' function.
Let $c:2^{[n]} \rightarrow \R_{\geq 0}$ be such that 
for all $S \subseteq [n]$, $c(S) = \sum_{i \in S} c_i$.
The problem is to find the permutation of
$[n]$ that minimizes the objective function in the Min-Sum Submodular Cover problem, 
$\sum_{i=1}^{n} c(S_i) \left(u(S_i)-u(S_{i-1})\right)$.
Thus Pipelined Set Cover problem is equivalent to the special case of the Min-Sum Submodular Cover problem where the utility function $u$ is a coverage function.  
The Min-Sum Set Cover problem is the special case of Pipelined Set Cover with unit costs.  (We note that Munagala et al. also present results for Weighted Pipelined Set Cover, where the ground elements have weights.)

\paragraph*{Min-Sum Facility Location}

Consider the following problem facility location problem, 
studied by Krause and Golovin~\cite{krausegolovinsurvey}.
There is a set $[n]$ of possible locations where facilities 
could be opened, to serve a collection of $m$ customers.
Opening a facility at location $a$ provides a service of 
value $M_{a,b}$ to customer $b$, where $M \in \R_{\geq 0}^{n \times m}$.
The utility of opening facilities in a subset $S$ of the locations is $u(S)$, where $u(S) =  \sum_{b=1}^m\max_{a \in S} M_{a,b}$.
This corresponds to the total value obtained by all the customers, assuming each customer chooses the open facility with highest service value.
The problem of Krause and Golovin is to maximize the utility function $u$ subject to a constraint on the number of facilities that can be opened.

We introduce a min-sum version of this facility location problem by considering the Min-Sum Submodular Cover problem with the utility function $u$ just described, and with $c_i$ representing
the time to open a facility $i$. 
This problem corresponds to a situation where facilities 
will be opened in all $n$ locations,
but they can only be opened one at a time.
$M_{a,b}$ represents the estimated value facility $a$ will provide to customer $b$ per unit of time, once facility $a$ is opened.  
Minimizing the objective value $\sum_{i=1}^{n} c(S_i) \left(u(S_i)-u(S_{i-1})\right)$ corresponds to finding the 
order to build facilities so as to minimize
lost value as facilities are built.

\paragraph*{Our Contributions}
We introduce the study of 
solving Min-Sum Submodular Cover using local search.
Building on work of Munagala et al.~\cite{munagala2005pipelined}, who presented a local-search algorithm for Pipelined Set Cover, we generalize their LP-based analysis by redefining a key quantity in their proof and using second-order supermodularity.
We show that local search produces a $(4+\epsilon)$-approximate solution for Min-Sum Submodular Cover
in time $O(n^3 \log (\frac{d}{\epsilon}))$, assuming second-order supermodularity of the utility function,  when initialized with a $d$-approximate solution.
We prove that a permutation listing the items in non-decreasing cost order is an $n$-approximate solution.
Thus initializing local search with a non-decreasing cost permutation
enables us to reach a $(4+\epsilon)$-approximate solution in time
$O(n^3 \log (\frac{n}{\epsilon}))$.  Applying this result
to Pipelined Set Cover improves on
the $O(n^3 \log (\frac{mn}{\epsilon}))$ time bound from Munagala et al., where $m$ is the size of the ground set of the set cover instance, by eliminating the dependence on $m$.
We also present results of experiments on two types of
Min-Sum Submodular Cover problems: Pipelined Set Cover
and Min-Sum Facility Location.
Our empirical findings suggest that local search
can reliably produce better solutions than
the natural greedy algorithm on small data sets.

\section{Preliminaries}\label{sec:prelim}
Let $f(e|S) := f(S\cup\{e\})-f(S)$
be the marginal utility of 
adding element $e$ to set $S$.
With this notation in hand,
we define several useful properties of set functions.

\begin{definition}[Set Function Properties]
\label{def:properties}
Consider a positive integer $n$ and
set function $f:2^{[n]} \rightarrow \R_{\geq 0}$.
We first define the following properties of $f$, which
hold if the inequality given below for that property holds for
all $S \subseteq [n]$ and all $i, j, k \in [n] \setminus S$,

\begin{itemize}
    \item \textit{monotone}:  $f(S \cup \{i\}) \geq f(S),$
    \item \textit{submodular} (diminishing returns):
    $f(i|S) \geq f(i|S \cup \{j\}),$
    
    \item \textit{second-order supermodular}:
    $f(i|S) - f(i|S \cup \{j\})
    \geq f(i|S \cup \{k\}) - f(i|S \cup \{k,j\}).$
    
\end{itemize}


\end{definition}

Note that the way in which we have written the above properties illustrates the progression
from monotonicity to submodularity and
submodularity to second-order
supermodularity:
we arrive at the `next' property by subtracting
the left-hand side from the right-hand side.
Another related property is modularity: for all $S \subseteq [n]$, $f(S) = \sum_{i \in S} f(\{i\})$.
In this paper, the functions we consider
will be monotone set functions that are normalized, i.e.,
$f(\emptyset)=0$ unless otherwise stated.



The Min-Sum Submodular Cover problem
is a special case of the Min-Sum
Permutation Problem,
defined by Happach et 
al.~\cite{happach2020general}.
That problem has the same
objective function as Min-Sum
Submodular Cover, and minimization may be over all permutations, or only over a subset of them.
The only assumptions on $u$ and $c$ in \cite{happach2020general}
are that they are monotone and
normalized.

\section{A Local Search Algorithm for Min-Sum Submodular Cover}\label{sec:main}

Munagala et al.~\cite{munagala2005pipelined} gave a local search algorithm for the special case of the Min-Sum Submodular Cover problem where $u$ is a coverage function.
Applying the same approach to the general Min-Sum Submodular Cover problem, we have the following local search algorithm:
initialize the algorithm with a permutation $\pi$ of $[n]$. 
Define a neighbor $\pi'$ of $\pi$ to be a permutation that
can be be produced from $\pi$ by
removing the element in some position $i$ of $\pi$ and reinserting it in position $j$.
Find the
neighbor $\pi'$ of $\pi$ with lowest 
objective value
(given by Equation \ref{eq:mspp}).
If that value is less than 
the objective value of $\pi$,
then replace $\pi$ by $\pi'$ and repeat.
Otherwise, output $\pi$.
Pseudocode for this algorithm is given in Algorithm~\ref{alg:local}.

The analysis of
Munagala et al.~\cite{munagala2005pipelined} shows that in the special case where $u$ is a coverage function,
Algorithm~\ref{alg:local} 
achieves a $(4+\epsilon)$-approximation to the optimal
permutation.
We
generalize their analysis
to all utility functions $u$ that are submodular and
second-order supermodular (in addition to being monotone and normalized, which we assume is the case for all utility functions in this paper).
%
%

\begin{lstlisting}[caption={Local search algorithm to produce a $(4+\epsilon)$-approximation.},label=alg:local,captionpos=t,float,abovecaptionskip=-\medskipamount, mathescape=true,escapeinside={*}{*}]
    Input: $\epsilon>0$, $n > 0$, utility function $u:2^{[n]}\rightarrow \R_{\geq 0}$,
    cost function $c:2^{[n]}\rightarrow \R_{\geq 0}$, $d$-approximate permutation $\pi$
    Output: permutation $\pi$
    for iteration in $\{1,\ldots,2n^3 \log(d/\epsilon)\}$ do
        $\pi^* \gets \pi$
        for $i,j \in [n]$ do
            # $\pi'$ is $\pi$ with $\pi(i)$ moved to position $j$
            $\pi' \gets move(\pi, i, j)$ 
            # objective$(u,c,\pi)$ is *Equation~(\ref{eq:mspp})*
            if objective$(u,c,\pi') < $ objective$(u,c,\pi^*)$ do
                $\pi^* \gets \pi'$ 
        if $\pi^* = \pi$
            # Algorithm converged:
            # $\pi$ is locally optimal with  respect to moves
            return $\pi$ # $4$-approximation
        $\pi \gets \pi^*$
    return $\pi$ # $(4+\epsilon)$-approximation
\end{lstlisting}

Let $\pi_{c}$ be a non-decreasing
cost permutation, i.e., $c(\{\pi_c(i)\}) \leq c(\{\pi_c(i+1)\})$
for $i \in [n-1]$.
We prove the following results.

\begin{theorem}\label{thm:4approx}
Fix a positive integer $n$.
Let $u:2^{[n]} \rightarrow \R_{\geq 0}$ be a submodular
and second-order
supermodular set function
and let $c:2^{[n]} \rightarrow \R_{\geq 0}$ be a modular set function.
If Algorithm~\ref{alg:local} converges before terminating, then the solution it returns
is a $4$-approximation to Min-Sum Submodular Cover on $u$
and $c$.
\end{theorem}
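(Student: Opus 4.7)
The plan is to adapt the LP-based local search analysis of Munagala et al.~\cite{munagala2005pipelined} for Pipelined Set Cover, using second-order supermodularity to substitute for the explicit ground-element structure their proof exploits. Fix a locally optimal permutation $\pi$ returned by Algorithm~\ref{alg:local}, and let $\pi^{*}$ be an optimal permutation, with prefix sets $S_i = \{\pi(1),\ldots,\pi(i)\}$ and $S^{*}_j = \{\pi^{*}(1),\ldots,\pi^{*}(j)\}$. I would write the objective as $\mathrm{OBJ}(\pi) = \sum_{i=1}^n c(S_i)\,u(\pi(i)\mid S_{i-1})$, upper bound it by the optimal value of a primal LP in variables indexed by position pairs $(i,j)$, lower bound $\mathrm{OBJ}(\pi^{*})$ by the value of a matching dual, and use local optimality to close the factor-$4$ gap between the two.

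The first technical step is to convert local optimality into usable inequalities. For every pair $(i,j)$, the move that removes $\pi(i)$ and reinserts it at position $j$ yields a neighbor whose objective is no smaller than $\mathrm{OBJ}(\pi)$; subtracting cancels all terms outside the range $[\min(i,j),\max(i,j)]$ and leaves a ``swap inequality'' that compares $c(S_i)\,u(\pi(i)\mid S_{i-1})$ to a signed sum of marginals $u(e\mid T)$ with $T$ a prefix of $\pi$ obtained by deleting $\pi(i)$. These are the abstract analogues of the swap inequalities of Munagala et al., but stated purely in terms of $u$-marginals rather than newly covered ground elements.

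The main obstacle, as the authors flag, is the lack of ground elements on which to index the LP. I would replace their per-ground-element variables by per-position variables $y_{i,j}$ that attribute the marginal $u(\pi^{*}(j)\mid S^{*}_{j-1})$ to a prefix of $\pi$, and use the telescoping identity $u(S^{*}_n) = \sum_j u(\pi^{*}(j)\mid S^{*}_{j-1})$ as the surrogate for ``total coverage''. Second-order supermodularity enters precisely in the manipulation of these attributions across differing prefixes: combined with submodularity, it gives the cross-prefix comparison $u(e\mid A) - u(e\mid A\cup\{f\}) \geq u(e\mid B) - u(e\mid B\cup\{f\})$ whenever $A\subseteq B$, which is what is needed to transform the swap inequalities of the previous paragraph into valid LP constraints and to monotonically relate marginals at a prefix $S_i$ to marginals at a larger set $S_i\cup T$.

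The final step is routine once the LP is in place: combine the local optimality inequalities with non-negative weights to exhibit a dual feasible solution of value $\mathrm{OBJ}(\pi)/4$, mirroring the dual weighting scheme of Munagala et al.\ and the standard $4$-approximation accounting for Min-Sum Set Cover, in which each prefix cost $c(S_i)$ is charged twice, to an ``earliest'' and a ``latest'' position in the optimum ordering. I expect the subtlety to lie not in the swap derivations and not in the factor-$4$ counting, but in the precise definition of the surrogate variables $y_{i,j}$: they must simultaneously lower bound $\mathrm{OBJ}(\pi^{*})$, be dominated by the local swap inequalities without extra slack, and behave well under the monotone cross-prefix comparisons enabled by second-order supermodularity.
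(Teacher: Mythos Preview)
Your high-level plan matches the paper's: reuse the Munagala et al.\ LP, replace ground-element variables by position-indexed surrogates, and invoke second-order supermodularity for the cross-prefix comparison. But the proposal stops short at exactly the point the paper identifies as the contribution, and what you have written does not yet pin down a workable definition of the surrogate variables or the step that converts raw local optimality into the LP constraint.

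Concretely, two ingredients are missing. First, the paper does not analyze move-neighbors directly; it passes to \emph{insertion} pseudo-neighbors (duplicate an element $o_i$ into position $j$ and double-count its cost), because the resulting inequality has a clean form
\[
\sum_{r\ge j} c(L_r)\,u(l_r\mid L_{r-1}) \;\le\; [c(o_i)+c(L_{j-1})]\,u(o_i\mid L_{j-1}) + \sum_{r\ge j}[c(o_i)+c(L_r)]\,u(l_r\mid L_{r-1}\cup\{o_i\}).
\]
Second, and this is the heart of the argument you have only gestured at, the paper's variables are
\[
b_{ij} \;=\; u(o_i\mid O_{i-1}\cup L_{j-1}) - u(o_i\mid O_{i-1}\cup L_{j-1}\cup\{l_j\}),
\]
mixing \emph{both} orderings in the conditioning set. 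Your description of $y_{i,j}$ as ``attributing $u(\pi^{*}(j)\mid S^{*}_{j-1})$ to a prefix of $\pi$'' does not capture this; the needed telescoping identities $\sum_i b_{ij}=u(l_j\mid L_{j-1})$ and $\sum_{r\ge j} b_{ir}=u(o_i\mid O_{i-1}\cup L_{j-1})$ only hold with the mixed conditioning. Finally, the insertion inequality above is \emph{not} the LP constraint; the paper bridges the two via a separate weighted-sum observation that requires (i) submodularity for $u(o_i\mid L_{j-1})\ge u(o_i\mid O_{i-1}\cup L_{j-1})$, (ii) second-order supermodularity for $u(l_r\mid L_{r-1}\cup\{o_i\})\le u(l_r\mid L_{r-1})-b_{ir}$, and (iii) an exact telescoping equality showing the two right-hand sides have the same total mass. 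Your proposal anticipates (ii) but not the need for, or the mechanism of, this bridging step.
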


Unfortunately, we cannot guarantee that Algorithm~\ref{alg:local}
will converge before terminating.
The next result guarantees a $(4+\epsilon)$-approximation
when the algorithm terminates.

\begin{theorem}\label{thm:4epsapprox}
Consider the positive integer $n$, utility function $u$,
and cost function $c$ considered in Theorem~\ref{thm:4approx}.
Fix $\epsilon > 0$.
Let $\pi$ be a $d$-approximate permutation. 
If Algorithm~\ref{alg:local} does not converge before terminating, then the solution it returns (after $2n^3 \log(d/\epsilon)$ iterations), 
is a $(4+\epsilon)$-approximation to Min-Sum Submodular
Cover on $u$ and $c$.
\end{theorem}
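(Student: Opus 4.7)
Write $V_\pi$ for the objective value of a permutation $\pi$, let $V^\ast$ be the optimum, and let $\pi_k$ be the permutation held by Algorithm~\ref{alg:local} after $k$ iterations of the outer loop, so $V_{\pi_0} \leq d\,V^\ast$. I want to show that $V_{\pi_T} \leq (4+\epsilon)\,V^\ast$ for $T := 2n^3 \log(d/\epsilon)$. The argument has two parts: a quantitative refinement of the bound underlying Theorem~\ref{thm:4approx}, followed by a standard geometric-decrease argument.

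First, I would re-examine the proof of Theorem~\ref{thm:4approx} to extract what it gives for permutations that are not necessarily locally optimal. That proof derives $V_\pi \leq 4V^\ast$ at a local optimum by summing slack inequalities indexed by swap moves and then invoking local optimality to drop the per-move improvement terms. Without using that final step, the same identity yields an inequality of the form
\begin{equation*}
\sum_{(i,j)\in[n]^2} \bigl( V_\pi - V_{\pi_{i\to j}} \bigr) \ \geq \ \beta \, \bigl( V_\pi - 4V^\ast \bigr)
\end{equation*}
for some explicit $\beta \geq 1/(2n)$, where $\pi_{i\to j}$ denotes the permutation obtained by removing the element in position $i$ of $\pi$ and reinserting it in position $j$. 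Second-order supermodularity of $u$ enters only to guarantee that certain cross-terms in the original sum have a fixed sign, and I would check that this sign is the one compatible with the inequality above.

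Second, since Algorithm~\ref{alg:local} picks the best of the $n^2$ move-neighbors, the improvement it achieves in iteration $k+1$ is at least the average improvement over the left-hand sum, so whenever $V_{\pi_k} > 4V^\ast$,
\begin{equation*}
V_{\pi_{k+1}} - 4V^\ast \ \leq \ \left( 1 - \tfrac{\beta}{n^2} \right)\bigl( V_{\pi_k} - 4V^\ast \bigr) \ \leq \ \left( 1 - \tfrac{1}{2n^3} \right)\bigl( V_{\pi_k} - 4V^\ast \bigr).
\end{equation*}
Iterating this contraction $T$ times and using $V_{\pi_0} - 4V^\ast \leq V_{\pi_0} \leq d\,V^\ast$ gives
\begin{equation*}
V_{\pi_T} - 4V^\ast \ \leq \ \left( 1 - \tfrac{1}{2n^3} \right)^{T} \bigl( V_{\pi_0} - 4V^\ast \bigr) \ \leq \ d\,V^\ast \cdot e^{-T/(2n^3)} \ = \ \epsilon\,V^\ast,
\end{equation*}
which is the claimed $(4+\epsilon)$-approximation. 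If the algorithm converges at some iteration $k' < T$, Theorem~\ref{thm:4approx} supplies the stronger $4$-approximation.

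The main obstacle will be the first step: verifying that the proof of Theorem~\ref{thm:4approx} actually delivers a lower bound on the summed per-move improvements that is a constant multiple of the current slack $V_\pi - 4V^\ast$, with the constant at least $1/(2n)$. This is the familiar complementary-slackness pattern for LP-based local-search analyses, and it is where the second-order supermodularity hypothesis does its real work; however, the bookkeeping of the second-order correction terms must be done with care so that no improvement term is inadvertently cancelled or given the wrong sign.
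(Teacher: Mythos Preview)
Your high-level architecture---establish a per-round multiplicative contraction of the excess $V_\pi - 4V^\ast$, then iterate---is the same as the paper's. The divergence is in how the per-round progress is obtained, and there your plan has a real gap.

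The paper does \emph{not} prove Theorem~\ref{thm:4approx} by ``summing slack inequalities indexed by swap moves and then dropping the per-move improvement terms.'' It feeds the local-optimality inequalities of Lemma~\ref{lemma:localrule} as \emph{constraints} into an LP in the variables $b_{ij}$ and cites a dual feasible solution of value $4$ from Munagala et~al. There is no explicit weighted sum in that proof that you could simply ``keep the improvement terms in''; to make your route work you would have to go back to the Munagala et~al.\ dual, read off the multipliers $\alpha_{ij}$ on the move constraints, and verify $\max_{i,j}\alpha_{ij}\le 2n$ in order to justify your claimed $\beta\ge 1/(2n)$. That may well hold, but it is not extractable from the proof of Theorem~\ref{thm:4approx} as written here; you are deferring the entire content of the argument to an unverified bound on external dual data.

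What the paper actually does for the non-locally-optimal case (Fact~\ref{fact:improvement}) is different and strictly stronger than your sum-then-average step. It sets up a \emph{second} LP, again over the $b_{ij}$, with an extra scalar variable $A$ added to the right-hand side of every move constraint, minimizes $A$ subject to the current ratio being $M$, and shows (again via the Munagala et~al.\ dual) that $A\ge (M-4)/(2n)$. This says directly that the \emph{best} single move improves the approximation ratio by at least $(M-4)/(2n)$, i.e., the contraction factor per round is $(1-\tfrac{1}{2n})$, not your $(1-\tfrac{1}{2n^3})$. Consequently only $2n\log(d/\epsilon)$ rounds are needed (Appendix~\ref{app:mainproof}); the $2n^3\log(d/\epsilon)$ in the theorem and pseudocode is slack that also absorbs the $O(n^2)$ neighbor scan per round into a time bound. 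Your averaging over $n^2$ moves is thus both unnecessary and costs a factor of $n^2$ in the round count.

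One further caution if you do pursue your route: the LP constraints in Lemma~\ref{lemma:localrule} are not the raw insertion inequalities themselves but are obtained from them via submodularity and second-order supermodularity (compare~\eqref{eq:localrule} with~\eqref{eq:actualrule}). The slack in an LP constraint is therefore only an \emph{upper bound} on the actual improvement from the corresponding insertion. Fortunately that inequality goes in the direction you need, so this does not break your argument---but it is an additional step you would have to make explicit.
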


Assuming constant access query access to $u$ and $c$,
Algorithm~\ref{alg:local} returns a $(4+\epsilon)$-approximation
in $2n^3 \log(d/\epsilon)$ time.


As in Munagala et al.~\cite{munagala2005pipelined},
in our analysis
we consider
a modified version of local search
based on ``insertions'' rather than ``moves.'' 
We find it easier to analyze local search
with insertions and the approximation result
immediately applies to local search with moves
since a permutation that is locally optimal
with respect to moves is also locally optimal
with respect to insertions.
In each iteration of local search with insertions, rather than considering the set of neighbors $\pi'$ of $\pi$,
the modified algorithm considers a set of what we will call pseudo-neighbors.  
Each is derived from $\pi$ by taking an element appearing in some position $i$ of $\pi$, and inserting a {\em second copy} of the element into some position $j < i$.  
Each pseudo-neighbor of $\pi$ corresponds to a unique neighbor of $\pi$, produced from the pseudo-neighbor by removing the original copy of the repeated element (which appears closer to the end of the permutation).

Define the objective value of a pseudo-neighbor $\pi'$ (which has length $n+1$)
to be $\sum_{i=1}^{n+1} c'(S_i)[u(S_i)-u(S_{i-1})]$, where here $S_i$ is the prefix of $\pi'$ containing its first $i$ elements, $u(S_i)$ is the value of $u$ for the set of {\em distinct} items in $S_i$, and $c'(S_i) = \sum_{j=1}^i c(\{s_j\})$ where $s_j$ is the element in position $j$ of $\pi'$.  That is, if both copies of the repeated element appear within the first $i$ positions of $\pi'$, then $c'(S_i)$ charges for both copies.

If the objective value of $\pi$ is no greater than the value of its pseudo-neighbors, then the modified algorithm outputs $\pi$.
Otherwise, the algorithm takes the pseudo-neighbor with lowest objective value,
deletes the original copy of its repeated element, and uses the resulting permutation as
the new value of 
$\pi$ in the next iteration.

The objective value of a pseudo-neighbor of $\pi$ is clearly greater than or equal to the objective value of the corresponding neighbor.  
Therefore, if $\pi$ has no neighbor with lower objective value, then it has no pseudo-neighbor with lower objective value.
It follows that the bounds we prove on the modified local search algorithm (with insertions) also apply to the original local search algorithm (with moves).

We prove Theorems \ref{thm:4approx} and \ref{thm:4epsapprox}
in the remainder of this section.

\subsection{Proof of Theorem \ref{thm:4approx}: $4$-approximation}\label{sub:4approx}

Say a permutation $\pi$ is locally optimal if no pseudo-neighbor has lower objective value.
We begin by proving that a locally optimal solution satisfies a certain inequality,  expressed in terms of variables $b_{ij}$.
This inequality is taken from the analysis in Munagala et al.~\cite{munagala2005pipelined}, but we define the variables $b_{ij}$ differently here.
We will use the following technical
Observation~\ref{obs:ineq} to prove
Lemma~\ref{lemma:localrule}.
\begin{observation}\label{obs:ineq}
   Consider three sequences of non-negative real numbers, $X_0, \ldots, X_n$, $Y_0, \ldots, Y_n$, and $C_0, \ldots, C_n$.  Let $j \in [n]$.
   Suppose that the following hold: (i) $X_0 \geq Y_0$, (ii) for all $r \in \{j, \ldots, n\}$,  
    $C_0 \leq C_r$ and
    $X_r \leq Y_r$, 
    and 
    (iii) $X_0 + \sum_{r=j}^n X_r
    = Y_0 + \sum_{r=j}^n Y_r$.
   Then 
   \begin{align*}
       C_0 X_0 + \sum_{r=j}^n C_r X_r
       \leq C_0 Y_0 + \sum_{r=j}^n C_r Y_r.
   \end{align*}
\end{observation}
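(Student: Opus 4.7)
The plan is to prove Observation~\ref{obs:ineq} by a short ``mass-transport'' argument: the hypotheses say that $X$ has a surplus of size $D := X_0 - Y_0 \geq 0$ at coordinate $0$ which must be made up by shortfalls $Y_r - X_r \geq 0$ at coordinates $r \in \{j,\ldots,n\}$, and every weight $C_r$ appearing against a shortfall is at least as large as the weight $C_0$ appearing against the surplus, so replacing $X$ by $Y$ can only increase the weighted sum.

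Concretely, I would first rewrite the conclusion by moving everything to one side, reducing the target inequality to
\begin{equation*}
C_0 (X_0 - Y_0) \;\leq\; \sum_{r=j}^n C_r (Y_r - X_r).
\end{equation*}
By hypothesis (i), the left-hand side equals $C_0 D$ with $D \geq 0$, and by hypothesis (iii) we have $\sum_{r=j}^n (Y_r - X_r) = D$. Each summand $Y_r - X_r$ is non-negative by hypothesis (ii), so I can apply the bound $C_r \geq C_0$ termwise (also from hypothesis (ii)) to obtain
\begin{equation*}
\sum_{r=j}^n C_r (Y_r - X_r) \;\geq\; C_0 \sum_{r=j}^n (Y_r - X_r) \;=\; C_0 D,
\end{equation*}
which is exactly the reduced inequality. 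The non-negativity of the $Y_r - X_r$ is the crucial point that licenses the termwise application of $C_r \geq C_0$; without it the inequality could flip.

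There is really no hard step here, but if anything is subtle it is making sure that hypothesis (iii) is used to convert a pointwise comparison into a global one: without (iii), one could not conclude that the total shortfall matches the surplus $D$, and the weighted inequality would not close. So the only writing care needed is to lay the three hypotheses out cleanly and then invoke them in the order (i)$\to$(iii)$\to$(ii) as above.
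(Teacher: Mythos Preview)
Your argument is correct and essentially identical to the paper's: both reduce the claim to $C_0(X_0-Y_0)\le\sum_{r=j}^n C_r(Y_r-X_r)$, use assumption~(iii) to rewrite $X_0-Y_0=\sum_{r=j}^n(Y_r-X_r)$, and then apply $C_r\ge C_0$ termwise, which is licensed by $Y_r-X_r\ge 0$ from~(ii). The only cosmetic difference is that you frame it as a mass-transport intuition and mention $D\ge 0$ from~(i), though neither proof actually needs~(i) (it is implied by~(ii) and~(iii)).
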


\begin{proof}
    Rewriting the final assumption yields
    \begin{align*}
        X_0 - Y_0 = \sum_{r=j}^n (Y_r - X_r)
        \iff C_0(X_0 - Y_0) 
        = \sum_{r=j}^n C_0 (Y_r - X_r) \\
        \Rightarrow C_0(X_0 - Y_0) \leq
        \sum_{r=j}^n C_r (Y_r - X_r)
    \end{align*}
    where the second implication follows
    from the non-negativity of $Y_r - X_r$
    and the assumption that $C_0 \leq C_r$.
    Observation \ref{obs:ineq} follows
    immediately.
\end{proof}

\begin{lemma}\label{lemma:localrule}
    Suppose $u:2^{[n]} \rightarrow \R_{\geq 0}$
    is a submodular and second-order supermodular
    set function and $c:2^{[n]} \rightarrow \R_{\geq 0}$
    is a modular set function.
    Let $L_j$ denote the first $j$ elements
    of the locally optimal permutation
    and $O_i$ denote the first $i$ elements
    of the optimal permutation.
    Similarly, we use $l_j$ to represent the $j$th element
    of the local permutation
    and $o_i$ to represent the $i$th element
    of the optimal permutation.
    Then
    \begin{align}\label{eq:constraint}
    \sum_{r=j}^n c(L_r) \sum_{s=1}^n b_{sr}
    \leq 
    [c(o_i) + c(L_{j-1})] \sum_{r=j}^n b_{ir}
    +\sum_{r=j}^n [c(o_i) + c(L_r)],
    \sum_{\substack{s=1\\s\neq i}}^n b_{sr}
    \end{align}
    where 
    \begin{align*}
     b_{ij} = u(o_i|O_{i-1} \cup L_{j-1})
    - u(o_i|O_{i-1} \cup L_{j-1} \cup \{l_j\}) \\  
    = u(l_j|O_{i-1} \cup L_{j-1})
    - u(l_j|O_{i-1} \cup L_{j-1} \cup \{o_i\}).
    \end{align*}
\end{lemma}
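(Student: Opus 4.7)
The plan is to derive the inequality from local optimality of $\pi$ with respect to pseudo-neighbor insertions, together with diminishing returns and Observation~\ref{obs:ineq}. First I would verify the two expressions for $b_{ij}$: both expand to the symmetric second difference
\[
u(O_{i-1}\cup L_{j-1}) + u(O_{i-1}\cup L_{j-1}\cup\{o_i,l_j\}) - u(O_{i-1}\cup L_{j-1}\cup\{o_i\}) - u(O_{i-1}\cup L_{j-1}\cup\{l_j\}),
\]
so the two forms coincide by definition (with $b_{ij}\geq 0$ following from submodularity). Two telescoping identities that I would use repeatedly are $\sum_{s=1}^n b_{sr}=u(L_r)-u(L_{r-1})=:\Delta_r$ and $\sum_{r=s}^n b_{ir}=u(o_i\mid O_{i-1}\cup L_{s-1})$, both a consequence of $O_n=L_n=[n]$.

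Next I would analyze the pseudo-neighbor $\pi'$ obtained by inserting a second copy of $o_i$ at position $j$ of $\pi$. Writing $V_r:=u(o_i\mid L_{r-1})$, a direct calculation shows that the marginal contribution of the shifted $l_r$ in $\pi'$ equals $\Delta_r-(V_r-V_{r+1})$, and combined with the extra $(c(L_{j-1})+c(o_i))V_j$ at the inserted position, Observation~\ref{obs:ineq} -- applied with $C_r=c(L_r)+c(o_i)$ (non-decreasing in $r$) to the $V$-increments (non-increasing by submodularity) -- collapses $\mathrm{obj}(\pi')-\mathrm{obj}(\pi)$ to $c(o_i)\sum_{r=j}^n\Delta_r - \sum_{r=j}^n c(l_r)V_r$. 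Local optimality $\mathrm{obj}(\pi')\geq\mathrm{obj}(\pi)$ then yields
\[
\sum_{r=j}^n c(l_r)\,V_r \;\leq\; c(o_i)\sum_{r=j}^n \Delta_r.
\]
When $o_i$ occupies a position $\leq j$ in $\pi$ the pseudo-neighbor is not formally defined, but in that case $V_r=0$ for all $r\geq j$ and the inequality holds trivially.

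Finally I would reduce the lemma to this inequality. Substituting $\sum_{s=1}^n b_{sr}=\Delta_r$ and $\sum_{s\neq i}b_{sr}=\Delta_r-b_{ir}$ and simplifying shows that the lemma is equivalent to
\[
\sum_{r=j}^n \bigl(c(L_r)-c(L_{j-1})\bigr)\,b_{ir} \;\leq\; c(o_i)\sum_{r=j}^n \Delta_r.
\]
Using $c(L_r)-c(L_{j-1})=\sum_{s=j}^r c(l_s)$ and swapping the order of summation, the left side becomes $\sum_{s=j}^n c(l_s)\,u(o_i\mid O_{i-1}\cup L_{s-1})$. Submodularity gives $u(o_i\mid O_{i-1}\cup L_{s-1})\leq u(o_i\mid L_{s-1})=V_s$ (equivalently, summing the pointwise second-order supermodular inequality $u(o_i\mid L_{s-1})-u(o_i\mid L_s)\geq b_{is}$ from $s$ to $n$), and the lemma follows by combining with the inequality from step two.

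The main obstacle will be step two: correctly expanding the difference in objective values, choosing the right assignment of $X_r,Y_r,C_r$ in Observation~\ref{obs:ineq} so that the Abel-style manipulation becomes a single clean invocation rather than an opaque block of summation by parts, and handling the degenerate positions of $o_i$ in $\pi$. The rest of the argument is careful bookkeeping with the telescoping identities.
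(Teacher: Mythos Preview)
Your argument is correct and takes a genuinely different route from the paper's. The paper rewrites the lemma as Equation~(\ref{eq:localrule}), writes down the raw local-optimality inequality~(\ref{eq:actualrule}), and then invokes Observation~\ref{obs:ineq} with the four hypotheses~(\ref{eq:monotone})--(\ref{eq:algebra}) to compare the two right-hand sides term by term; the crucial hypothesis~(\ref{eq:supmod}), namely $u(l_r\mid L_{r-1}\cup\{o_i\})\le u(l_r\mid L_{r-1})-b_{ir}$, is exactly where second-order supermodularity enters. You instead push the local-optimality side one step further by Abel summation: the identity $\mathrm{obj}(\pi')-\mathrm{obj}(\pi)=c(o_i)\sum_{r\ge j}\Delta_r-\sum_{r\ge j}c(l_r)V_r$ is an \emph{exact} equality (so your appeal to Observation~\ref{obs:ineq} in that step is superfluous---no inequality is actually being used there). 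On the lemma side you perform the matching Abel summation $\sum_{r\ge j}(c(L_r)-c(L_{j-1}))b_{ir}=\sum_{s\ge j} c(l_s)\,u(o_i\mid O_{i-1}\cup L_{s-1})$, and the bridge between the two becomes the single termwise bound $u(o_i\mid O_{i-1}\cup L_{s-1})\le u(o_i\mid L_{s-1})=V_s$, which is plain submodularity. Consequently your proof establishes Lemma~\ref{lemma:localrule} \emph{without} the second-order supermodularity hypothesis---a strictly stronger statement than what the paper proves, and one that, since this lemma is the only place that hypothesis is invoked downstream, would resolve affirmatively the open question posed in the introduction.
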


\begin{proof}
When $u$ is a coverage function,
as in the analysis in Munagala et al.~\cite{munagala2005pipelined},
$b_{ij}$ represents the number of ground
elements covered in the optimal permutation by subset $o_i$
(and not by $o_1, \ldots, o_{i-1}$)
and in the local permutation
by subset $l_j$ (and not by $l_1, \ldots, l_{j-1})$.
Our definitions of $b_{ij}$
generalize this intuition to functions
where the utility is more abstract.
In particular, we can use telescoping sums
to derive the following identities:
\begin{align}
    &\sum_{i=1}^n b_{ij} = u(l_j|L_{j-1})
    \hspace{2em}
    \sum_{r=j}^n b_{ir} = u(o_i|O_{i-1} \cup L_{j-1})
    \label{eq:identities}
    \hspace{2em}
    \sum_{\substack{s=1\\s\neq i}}^n b_{sr}
    = u(l_r|L_{r-1}) - b_{ir}
\end{align}

Then Equation~(\ref{eq:constraint})
is equivalent to
\begin{align}\label{eq:localrule}
    \sum_{r=j}^n c(L_r) &u(l_r|L_{r-1})
    \leq [c(o_i) + c(L_{j-1})]
    {u(o_i|O_{i-1}\cup L_{j-1})}
    + \sum_{r=j}^n [c(o_i) + c(L_r)]
    [u(l_r|L_{r-1}) - b_{ir}].
\end{align}

Notice that Equation~(\ref{eq:localrule})
is not \textit{quite}
equivalent to the property of local optimality
with respect to insertions.
Instead, we know the following (very similar)
inequality holds
by the property that $L$ is locally
optimal:
\begin{align}
    \label{eq:actualrule}
    \sum_{r=j}^n c(L_r) &u(l_r|L_{r-1})
    \leq [c(o_i) + c(L_{j-1})]
    {u(o_i|L_{j-1})}
    + \sum_{r=j}^n [c(o_i) + c(L_r)]
    {u(l_r|L_{r-1} \cup \{o_i\})}.
\end{align}

We will now show that the right-hand side
of Equation~(\ref{eq:actualrule}) lower bounds
the right-hand side of Equation~(\ref{eq:localrule}).
Then Equation~(\ref{eq:localrule}) follows from
Equation~(\ref{eq:actualrule}).
We do this through the following four conditions
combined with Observation \ref{obs:ineq}:
\begin{align}
    \label{eq:monotone}
    &c(L_{j-1}) \leq c(L_{r-1})
    \\
    \label{eq:submod}
    &u(o_i|L_{j-1})
    \geq u(o_i|O_{i-1} \cup L_{j-1})
    \\
    \label{eq:supmod}
    &u(l_r|L_{r-1} \cup \{o_i\})
    \leq u(l_r|L_{r-1}) - b_{ir} \\
    \label{eq:algebra}
    &u(o_i|L_{j-1})
    + \sum_{r=j}^n {u(l_r|L_{r-1} \cup \{o_i\})}
    = {u(o_i|O_{i-1}\cup L_{j-1})}
    + \sum_{r=j}^n
    [u(l_r|L_{r-1}) - b_{ir}]
\end{align}
for all $i,j,r \in [n]$ with $j \leq r \leq n$.
Equation~(\ref{eq:monotone}) holds by the monotonicity
of $c$ and
Equation~(\ref{eq:submod}) holds by the submodularity
of $u$.
Equation~(\ref{eq:supmod}) holds
because $u$ is second-order supermodular.
(This is where we use second-order supermodularity.)

Finally, Equation~(\ref{eq:algebra})
holds by the following argument:
Notice that the left-hand side
is equal to $u([n])-u(L_{j-1})$ since
we sequentially sum the marginal
utility of adding the next element
to our current chain.
Similarly, the right-hand side simplifies
to $u([n]) - u(L_{j-1})$
after cancellation using 
Equation~(\ref{eq:identities}).

Using the above, we apply Observation \ref{obs:ineq}
with $C_0=c(o_i) + c(L_{j-1})$,
$C_{r-j+1} = c(o_i) + c(L_{r-1})$,
$X_0=u(o_i|L_{j-1})$,
$X_{r-j+1} = u(l_r|L_{r-1} \cup \{o_i\})$,
$Y_0=u(o_i|O_{i-1} \cup L_{j-1})$,
and $Y_{r-j+1}=u(l_r|L_{r-1})-b_{ir}$.
This yields
\begin{align*}
    & [c(o_i) + c(L_{j-1})]
    u(o_i|L_{j-1})
    + \sum_{r=j}^n
    [c(o_i) + c(L_{r-1})]
    [u(l_r|L_{r-1} \cup \{o_i\})]  \\ 
    & \leq [c(o_i) + c(L_{j-1})]
    u(o_i|O_{i-1} \cup L_{j-1})
    + \sum_{r=j}^n 
    [c(o_i) + c(L_{r-1})]
    [u(l_r|L_{r-1})-b_{ir}].
\end{align*}
Then the right-hand side of 
Equation~(\ref{eq:localrule})
must be greater than or equal to
the right-hand side of 
Equation~(\ref{eq:actualrule}).
Therefore Lemma \ref{lemma:localrule}
follows by Equation~(\ref{eq:actualrule}).
\end{proof}

\begin{proof}[Proof of Theorem \ref{thm:4approx}]
We will now prove that a locally
optimal permutation with respect to insertions
is (at worst) a 4-approximation of the optimal permutation.
To do this, we will show that the same linear program 
used by Munagala et al.~\cite{munagala2005pipelined} to 
prove their 4-approximation for coverage functions
also applies to 
all utility functions that are monotone,
submodular and second-order supermodular.
The linear program in Munagala et al.~\cite{munagala2005pipelined}
is:
\begin{align}
    & \textrm{maximize } \sum_{j=1}^n c(L_j)
    \sum_{i=1}^n b_{ij} \\
    &\textnormal{ subject to }
    \sum_{i=1}^n c(O_i) \sum_{j=1}^n b_{ij} \leq 1
    \hspace{1em} \textnormal{and} \nonumber  \\
    & \sum_{r=j}^n c(L_r) \sum_{s=1}^n b_{sr}
    \leq
    (c(o_i) + c(L_{j-1})) \sum_{r=j}^n b_{ir}
    + \sum_{r=j}^n (c(o_i) + c(L_r))
    \sum_{\substack{s=1\\s\neq i}}^n b_{sr} 
    \hspace{1em} \forall i,j \in [n]
    \label{eq:lp4approx}
\end{align}
where the variables $b_{ij}$ are non-negative
for all $i,j \in [n]$.

The first constraint scales the variables in
the optimal permutation so
that the objective is the ratio of the
local permutation to the optimal permutation.
By Lemma \ref{lemma:localrule}, the second constraint must hold for locally optimal $L$, with the given values for $b_{ij}$.
As shown in Munagala et al.~\cite{munagala2005pipelined},
there is a feasible solution to the dual of the LP
with objective value 4 so, by strong
duality, the locally optimal permutation
achieves a 4-approximation to the optimal permutation.
\end{proof}

\subsection{Proof of Theorem \ref{thm:4epsapprox}: $(4+\epsilon)$-approximation}

In this section, we show that local search
reaches a $(4+\epsilon)$-approximation in reasonable time.
The proof is based on the following fact, which bounds the progress made in each ``round'' of the local search (i.e., in each iteration of the while loop in Algorithm~\ref{alg:local}).

\begin{fact}
    \label{fact:improvement}
    Let $\pi$ be a permutation of $[n]$ that is an $M$-approximation to
    Min-Sum Submodular Cover.  
    Then applying one round of local search to this permutation will
    either establish that it is a local optimum or yield a new permutation that is
    an $(M-\frac{M-4}{2n})$-approximation.
\end{fact}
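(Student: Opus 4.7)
The plan exploits the linear-programming framework behind the proof of Theorem~\ref{thm:4approx} to quantify, for a $\pi$ that is not locally optimal, how much the best neighbor improves the objective. First observe that when $M \leq 4$ the claimed bound $M - \frac{M-4}{2n} \geq M$ is vacuous, since one round of local search never worsens the objective value of $\pi$. So assume $M > 4$. Since move-local-optimality implies insertion-local-optimality (the pseudo-neighbor objective dominates the corresponding move objective), Theorem~\ref{thm:4approx} implies that $\pi$ cannot be locally optimal with respect to pseudo-neighbor insertions; some pair $(i,j)$ must therefore violate Equation~(\ref{eq:actualrule}) and at least one move strictly improves the objective.

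For each $(i,j) \in [n]^2$, let
\begin{align*}
    I_{ij} = \sum_{r=j}^n c(L_r)\, u(l_r \mid L_{r-1})
    - [c(o_i) + c(L_{j-1})]\, u(o_i \mid L_{j-1})
    - \sum_{r=j}^n [c(o_i) + c(L_r)]\, u(l_r \mid L_{r-1} \cup \{o_i\})
\end{align*}
denote the decrease in objective attained by the $(i,j)$-pseudo-neighbor, so that Equation~(\ref{eq:actualrule}) holds precisely when $I_{ij} \leq 0$. Because the corresponding move discards the duplicated copy of $o_i$ and therefore pays strictly less, the improvement in the objective from the best move examined in one round is at least $\max_{i,j} I_{ij}^+$, where $(x)^+ = \max(x, 0)$. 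It therefore suffices to lower bound $\max_{i,j} I_{ij}^+$.

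To do so, let $(\alpha_{ij})_{i,j \in [n]}$ together with $\beta = 4$ be a dual-optimal solution of the LP used by Munagala et al.~\cite{munagala2005pipelined} to certify the $4$-approximation, and substitute the particular $b_{ij}$ from Lemma~\ref{lemma:localrule}, for which $\sum_j c(L_j) \sum_i b_{ij} = A(\pi)$ and $\sum_i c(O_i) \sum_j b_{ij} = \mathrm{OPT}$. Multiplying the dual constraints by $b_{kl} \geq 0$ and summing gives, by a standard LP-Lagrangian calculation,
\begin{align*}
    A(\pi) \leq 4 \cdot \mathrm{OPT} + \sum_{i,j} \alpha_{ij}\, s_{ij}^-,
\end{align*}
where $s_{ij}^- = \max\bigl(0,\, \mathrm{LHS}_{ij} - \mathrm{RHS}_{ij}\bigr)$ measures the violation of Equation~(\ref{eq:constraint}) at these $b_{ij}$. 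Because the right-hand side of Equation~(\ref{eq:actualrule}) is at most the right-hand side of Equation~(\ref{eq:constraint})---exactly what Observation~\ref{obs:ineq} yields inside the proof of Lemma~\ref{lemma:localrule}---one checks that $s_{ij}^- \leq I_{ij}^+$, and hence $(M-4)\,\mathrm{OPT} \leq \bigl(\sum_{i,j} \alpha_{ij}\bigr)\, \max_{i,j} I_{ij}^+$.

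The remaining and most delicate step, which I expect to be the main technical obstacle, is verifying that the dual solution of Munagala et al. satisfies $\sum_{i,j} \alpha_{ij} \leq 2n$. This reduces to inspecting their concrete dual construction; I expect the estimate to follow from the fact that the costs $c(L_r)$ are monotone in $r$ along the local permutation, which makes only $O(n)$ of the multipliers non-zero with constant magnitude. Once this bound is in hand, $\max_{i,j} I_{ij}^+ \geq (M-4)\,\mathrm{OPT}/(2n)$, and since actual moves improve by at least as much as their pseudo-neighbor counterparts, the permutation returned after one round of Algorithm~\ref{alg:local} has objective at most $\bigl(M - \tfrac{M-4}{2n}\bigr)\mathrm{OPT}$, as required.
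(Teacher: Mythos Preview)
Your Lagrangian route is sound in spirit and closely mirrors the paper's logic, but the paper sidesteps the step you flag as ``the main technical obstacle'' by working with a \emph{second} linear program from Munagala et al.~\cite{munagala2005pipelined}, not the one underlying the $4$-approximation. In that LP the improvement $A$ is itself a primal variable: one minimizes $A$ subject to the scaling constraint $\sum_i c(O_i)\sum_j b_{ij}\le 1$, the type-2 constraints of Equation~(\ref{eq:constraint}) \emph{relaxed by~$A$} on the right-hand side, and the additional constraint $\sum_j c(L_j)\sum_i b_{ij}\ge M$. The actual $b_{ij}$ from Lemma~\ref{lemma:localrule} together with the true best-insertion improvement form a feasible point of this LP (this is precisely where your inequality $s_{ij}^-\le I_{ij}^+$ enters), so the LP minimum lower-bounds the true improvement. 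Munagala et al.\ already prove, via duality, that this LP minimum is at least $(M-4)/(2n)$; the paper simply cites that computation once Lemma~\ref{lemma:localrule} certifies feasibility in the general second-order supermodular setting.

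Your plan to instead bound $\sum_{i,j}\alpha_{ij}\le 2n$ for the $4$-approximation dual is not wrong in principle, but the justification you offer (``only $O(n)$ multipliers non-zero with constant magnitude'') is speculative and not obviously uniform across instances: the dual constraints involve the costs $c(L_r)$ and $c(O_i)$, so there is no a~priori reason the $\alpha_{ij}$ are bounded independently of those costs. Closing this gap would still force you to inspect Munagala et al.'s explicit dual construction, at which point it is simpler and cleaner to adopt their LP-with-$A$ formulation directly and invoke their $(M-4)/(2n)$ bound, as the paper does.
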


\begin{proof}
In Munagala et al.~\cite{munagala2005pipelined},
they prove that Fact \ref{fact:improvement}
holds for a Pipelined Filter Ordering instance, i.e., when $u$ is a coverage
function.
Consider an $M$-approximate solution.
Let $A$ be a variable representing the
reduction in approximation factor after
making the local search step.
As in Equation~(\ref{eq:lp4approx}),
the variables $b_{ij}$ are non-negative
real numbers and correspond to a concept
of shared utility formalized in Lemma \ref{lemma:localrule}.
Then the following linear program is what
Munagala et al.~\cite{munagala2005pipelined} 
use to lower-bound the improvement in approximation ratio:
\begin{align*}
    &\textrm{minimize } A \\
    &\textrm{ subject to } 
    \sum_{i=1}^n c(O_i) \sum_{j=1}^n b_{ij} \leq 1
    \textrm{ and } \\
    &\sum_{r=j}^n c(L_r) \sum_{s=1}^n b_{sr}
    \leq A+
    (c(o_i) + c(L_{j-1})) \sum_{r=j}^n b_{ir}
    + \sum_{r=j}^n (c(o_i) + c(L_r))
    \sum_{\substack{s=1\\s\neq i}}^n b_{sr} 
    \textrm{ and }\\
    &\sum_{j=1}^n c(L_j) \sum_{i=1}^n b_{ij} \geq M
    \hspace{1em} \forall i,j \in [n].
\end{align*}
For the more general case where $u$ is an
abstract submodular and second-order supermodular
utility function, the first and third
inequalities trivially hold for our generalized
definition of $b_{ij}$.
The second inequality follows from
Lemma \ref{lemma:localrule}.
By taking the dual, Munagala et al.~\cite{munagala2005pipelined} show that the objective
of the primal and therefore the reduction in
approximation ratio is at least $\frac{M-4}{2n}$
which gives Fact \ref{fact:improvement}.
When $u$ is an arbitrary submodular and second-order
supermodular function, 
it follows from Lemma \ref{lemma:localrule}
that the same linear program still lower-bounds the improvement.
\end{proof}

To achieve a good bound on the time required for the local search for Min-Sum Submodular Cover, we want to begin the local search from a permutation that is not too far from optimal.  The 
following theorem gives such a permutation.

\begin{theorem}[Non-Decreasing Cost]
\label{thm:cost}
A permutation that orders the elements $i\in [n]$ in non-decreasing order of $c(\{i\})$ is an $n$-approximate solution to Min-Sum
Submodular Cover.
\end{theorem}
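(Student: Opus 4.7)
My plan is to compare $\text{obj}(\pi_c)$ to $\text{OPT}$ through the auxiliary quantity $G(\pi) := \sum_{i \in [n]} c_i\, g_i^\pi$, where $g_i^\pi$ denotes the marginal utility that element $i$ contributes when it is inserted in the order $\pi$, i.e.\ $g_i^\pi = u(i\,|\,\{j : j \text{ precedes } i \text{ in } \pi\})$. By reversing the order of summation, one checks the identity $G(\pi) = \sum_{j=1}^n c(\{\pi(j)\})\, (u(S_j^\pi) - u(S_{j-1}^\pi))$ for every permutation $\pi$; this is the same rearrangement used implicitly in the analysis of Lemma~\ref{lemma:localrule}.

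The strategy is to sandwich the ratio $\text{obj}(\pi_c)/\text{OPT}$ via three inequalities. First, because $\pi_c$ lists elements in non-decreasing cost order, every entry in the prefix $S_j^{\pi_c}$ has cost at most $c(\{\pi_c(j)\})$, so $c(S_j^{\pi_c}) \leq j\, c(\{\pi_c(j)\}) \leq n\, c(\{\pi_c(j)\})$; substituting this bound into the objective yields $\text{obj}(\pi_c) \leq n\, G(\pi_c)$. Second, the trivial bound $c(S_j^{\pi^*}) \geq c(\{\pi^*(j)\})$ (the prefix always contains $\pi^*(j)$) gives $\text{OPT} \geq G(\pi^*)$. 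It then remains to prove the key comparison $G(\pi_c) \leq G(\pi^*)$.

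The hard part will be this comparison, which I would establish by an adjacent-swap argument showing that $G$ is minimized by any cost-sorted permutation. Suppose $\pi$ has element $a$ in position $j$ and element $b$ in position $j+1$ with $c(\{a\}) > c(\{b\})$, and let $\pi'$ be obtained by swapping them. Writing $T = S_{j-1}^\pi$, only $g_a$ and $g_b$ change under the swap, since elements before position $j$ have identical predecessor sets and elements after position $j+1$ have predecessor set $T \cup \{a,b\}$ in both permutations. Expanding $u(T \cup \{a,b\})$ in the two natural orders yields the symmetric identity $u(a|T) - u(a|T \cup \{b\}) = u(b|T) - u(b|T \cup \{a\})$, whose common value $\delta$ is non-negative exactly because $u$ is submodular. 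A direct calculation then gives $G(\pi') - G(\pi) = \delta\, (c(\{b\}) - c(\{a\})) \leq 0$. Bubble-sorting $\pi^*$ into a non-decreasing cost permutation through such swaps (tied-cost swaps leave $G$ unchanged) produces a permutation with the same $G$-value as $\pi_c$, so $G(\pi_c) \leq G(\pi^*)$; chaining the three inequalities yields $\text{obj}(\pi_c) \leq n \cdot \text{OPT}$.
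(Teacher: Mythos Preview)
Your argument is correct. The identity for $G(\pi)$, the bound $\text{obj}(\pi_c)\le n\,G(\pi_c)$, the bound $\text{OPT}\ge G(\pi^*)$, and the adjacent-swap argument showing that any cost-sorted permutation minimizes $G$ are all valid; the symmetric identity $u(a\mid T)-u(a\mid T\cup\{b\})=u(b\mid T)-u(b\mid T\cup\{a\})$ together with submodularity gives exactly the sign you need.

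The paper proves the theorem by a genuinely different route. Instead of the auxiliary quantity $G$ and an exchange argument, it introduces the sets $S'_j=[\max S_j]$ (the initial segment up to the largest index seen so far in the optimal order) and compares the increasing-cost objective to $\sum_j c(S'_j)\big(u(S'_j)-u(S'_{j-1})\big)$ via two monotonicity steps, then uses $n\cdot c(S_j)\ge c(S'_j)$ to relate this to $\text{OPT}$. The notable difference is in the hypotheses actually used: the paper's proof relies only on monotonicity of $u$ (and monotonicity/subadditivity of $c$), so it extends to the general Min-Sum Permutation Problem with arbitrary monotone $u$. Your swap step, by contrast, needs $\delta\ge 0$, i.e.\ submodularity of $u$, so your proof is specific to the submodular setting. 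Within that setting your approach is arguably cleaner and more transparent, but you should be aware that it does not recover the paper's stronger statement that the $n$-approximation holds for merely monotone utility.
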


\begin{proof}
Suppose, without loss of generality, that 
$c(\{1\}) \le \cdots \le c(\{n\})$.

Suppose an optimal solution is given by the sets $\emptyset=S_0,S_1 \ldots, S_n$, and for each $j \in [n]$, let $S'_j=[\max S_j]$, where $\max S_j$ is the maximum integer in $S_j$. For instance, for $n=4$ and
$S_1=\{2\}$,
$S_{2}=\{12\}$,
$S_3=\{124\}$,
$S_4=\{1234\}$, we have 
$S'_1=\{12\}$,
$S'_2=\{12\}$,
$S'_3=\{1234\}$,
$S'_4=\{1234\}$.
Setting $[0]$ and $S'_0$ equal to $\emptyset$,
observe that
\begin{align*}
\sum_{j=1}^n c([j])(u([j])-u([j-1]))
 \le \sum_{j=1}^n c(S'_j)(u([j])-u([j-1]))
= \sum_{j=1}^n c(S'_j)(u(S'_j)-u(S'_{j-1})),
\end{align*}
where the inequality follows by the monotonicity of
$c$ and the equality follows (not term-wise but in total)
by considering when
utility is accrued by each side.

Now, for each $j \ge 0$, we have $c(S_j) \ge c(\{\max S_j\})$, by the monotonicity of $c$. Also, since $c(\{\max S_j\}) \ge c(\{i\})$ for every $i \in S_j$ by the indexing assumption,
\[
n \cdot c(S_j) \ge n \cdot c(\{\max S_j\}) = n \cdot c(\{\max S'_j\}) \ge  c(S'_j)
\]
where the equality follows from the definition of $S'_j$
and the second inequality follows since there are at most $n$
elements in $S'_j$.
Then the objective value of the optimal permutation is at least ${\frac{1}{n} \sum_{j=1}^n c(S'_j)(u(S_j)-u(S_{j-1}))}$, or equivalently by charging utility to each increase in cost,
\[
\frac{1}{n} \sum_{j=1}^n (u([n])-u(S_{j-1}))(c(S'_j)-c(S'_{j-1})).
\]
By the monotonicity of $u$, this sum is at least
\[
\frac{1}{n} \sum_{j=1}^n (u([n])-u(S'_{j-1}))(c(S'_j)-c(S'_{j-1}))
\Leftrightarrow
\frac{1}{n} \sum_{j=1}^n c(S'_j)(u(S'_j)-u(S'_{j-1})).
\]
This is at least $1/n$ times the objective value of the increasing cost permutation, by our earlier observation.
\end{proof}

We note that Theorem \ref{thm:cost} also holds
for Min-Sum Permutation
Problems minimizing over all permutations where
$u$ only satisfies monotonicity and $c$ only satisfies
monotonicity and subadditivity (both are still
normalized).

We can now prove that the output of Algorithm \ref{alg:local}
is a $(4+\epsilon)$-approximation to Min-Sum Submodular Cover. 
The time bound assumes constant time oracle queries.


\begin{proof}[Proof of Theorem \ref{thm:4epsapprox}]
The improvement in the quality of the solution in each round of local search, guaranteed by Fact~\ref{fact:improvement}, implies 
that a $(4+\epsilon)$-approximation is achieved within 
$O(n \log (\frac{d}{\epsilon}))$ rounds of Algorithm~\ref{alg:local}, when it is initialized with a $d$-approximate permutation.
This implication was stated without proof by Munagala et al.~\cite{munagala2005pipelined}, in proving the same bounds for Pipelined Set Cover.  For completeness, we present a proof of the implication in Appendix~\ref{app:mainproof}.  
The $O(n^3 \log (\frac{d}{\epsilon}))$ time bound is achieved by spending $O(n^2)$ per round.
To accomplish this, we do not recompute all $n$ terms of the objective function for each of the $\theta(n^2)$ neighbors of the current solution.
Instead, by considering `adjacent' neighbors
sequentially, we can compute the objective function value for the
next neighbor from the value obtained for the previous neighbor in constant time,
by recomputing only two terms of the objective function.
The time bound for the non-decreasing cost permutation follows from Theorem~\ref{thm:cost}.
\end{proof}



\section{Experiments}\label{sec:experiments}

Assuming constant-time oracle access to the utility functions,
the greedy algorithm runs in $O(n^2)$ total time, while our
local search algorithm spends time $O(n^2)$ in each round.
In our experiments, with no oracle, we had to compute the utility function.

The greedy algorithm is certainly faster than the local search, but
it only explores one type of solution, where
cost effective elements appear earlier in the permutation. 
Local search initialized with random permutations
can sample from the entire solution space.
Our experiments compare the greedy solution
to local search solutions from four random initial permutations,
and from a non-decreasing cost permutation.  We run each local search
for $n$ steps
(rather than running it to convergence, or until it is guaranteed to find a $(4+\epsilon)$-approximate solution).
We see empirically that the best of the 5 local search solutions
tends to be better than the worst, and also better than 
the greedy solution.
In applications where computing is cheap, $n$ is not large,
and the quality of the solution is crucial,
using local search may be preferable to using greedy.

Our experiments compare greedy and local search 
on 100 random instances of two problems: Pipelined Set Cover  and Min-Sum Facility Location.
For our random instances, we set $n=30$ and each cost 
$c_i$ to be a uniform random value between 0 and 1.  
Figure \ref{fig:experiments} shows the results of our experiments, with objective values given relative to the best 
of the 6 solutions (1 greedy and 5 local search).
Local search finds the best solution in almost $100\%$
of the 100 instances whereas greedy finds the best
in roughly $50\%$.

\begin{figure}
    \centering
    \includegraphics[width=.45\textwidth]{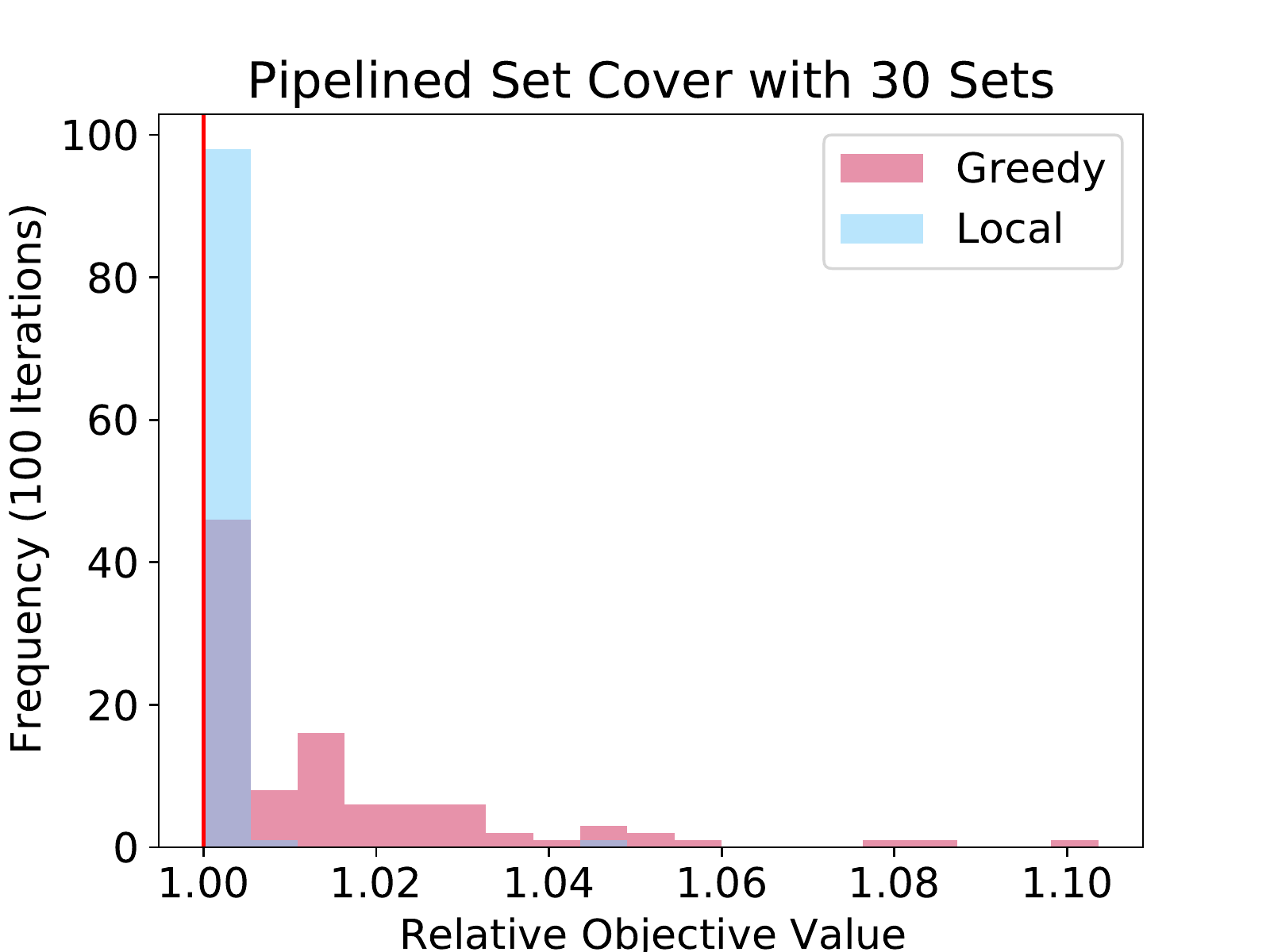}
    \includegraphics[width=.45\textwidth]{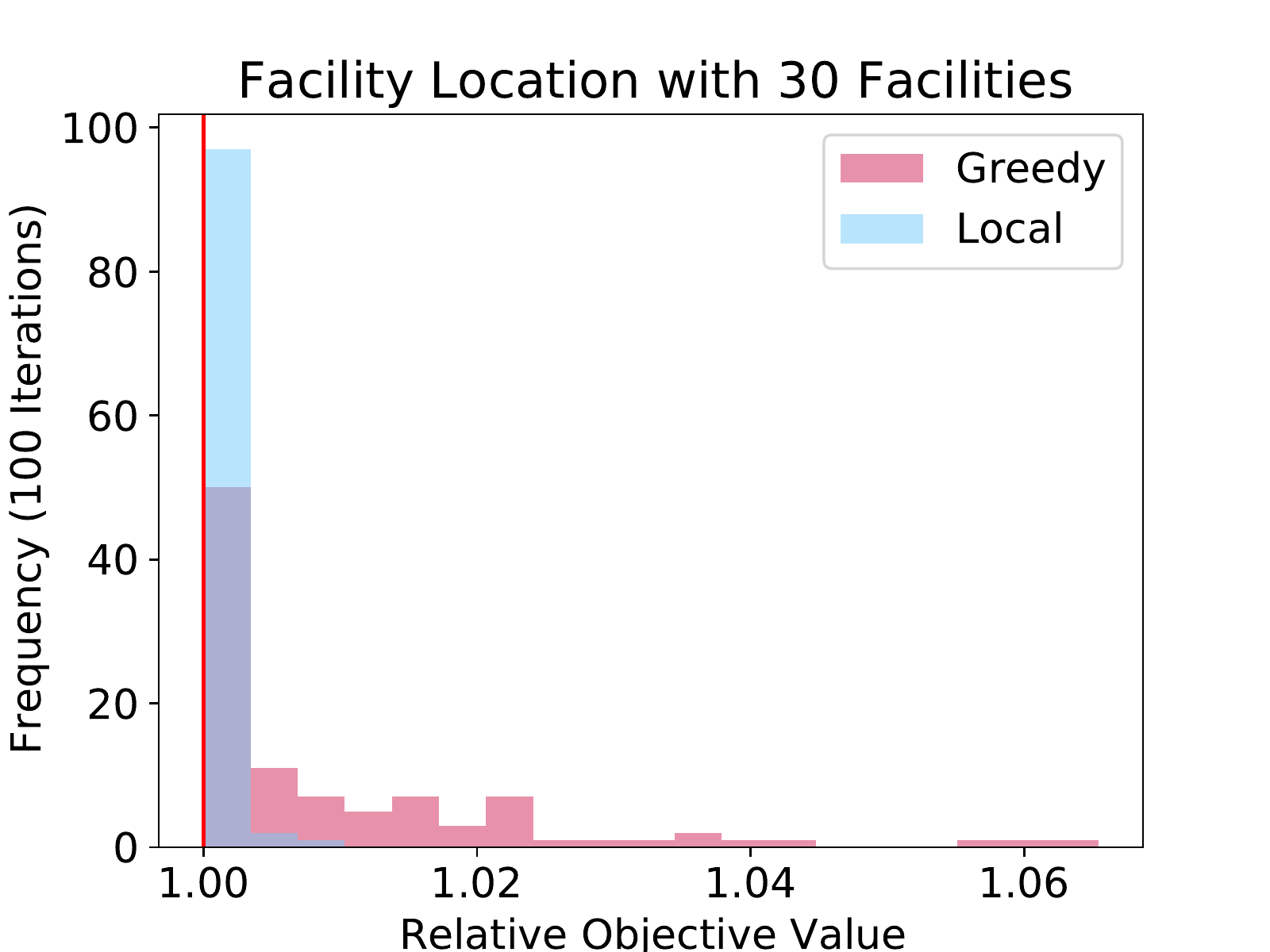}
    \caption{Histograms of greedy and local search performance
    with $n=30$ in Pipelined Set Cover and 
    Min-Sum Facility Location.
    The relative objective values are normalized with respect to the best
    of the 6 generated solutions (1 greedy and 5 local search). Frequency is reported from the 100 random instances generated for each dataset.}  
    \label{fig:experiments}
\end{figure}

\paragraph*{Pipelined Set Cover}
We perform experiments on synthetic, randomly generated instances of the (unweighted) Pipelined Set Cover problem
with correlated subsets, following an approach of Babu et al.~\cite{babu2004adaptive}.
Recall that an instance of (unweighted) Pipelined Set Cover consists of a finite ground set $[m]$ and
a collection of subsets 
$D_i \subseteq V$ for $i \in [n]$.
The utility of a set $S \subseteq [n]$ is $u(S) = |\bigcup_{i \in S} D_i|$.
The $n$ subsets $D_i$ in our random instance are divided
into $\lceil n/\Gamma \rceil$ groups, where $\Gamma$ is a ``correlation factor.''
The instance has the following properties,
for each element $j \in [m]$.
For all $i \in [n]$, $\Pr[j \in D_i]$ is a fixed value $p$.
For two subsets $D_i$ and $D_{i'}$
in different groups, 
membership of $j$ in $D_i$
is independent of its membership in $D_{i'}$.
For two subsets $D_i$ and $D_{i'}$
in the same group, the probability that $j$ has the same
membership status in $D_i$ and $D_{i'}$ (i.e., is either in both subsets,
or in neither), is a fixed value $\rho$.
In our experiments, $m=2n$, $\Gamma=4$, $p=0.3$, and $\rho =0.7858$.
In Appendix~\ref{app:babuexp}, we describe in detail how we generated the instance.

\paragraph*{Min-Sum Facility Location}
We use the locations of $n$ Citi Bike stations \cite{citibike}
in New York City as the facilities for our 
facility location data set.
For calculating the utility to customers, we uniformly
generate customer locations within the range of latitude
and longitude of the stations.
The value $M_{a,b}$ for customer $b$ and station $a$ is
the inverse of the Euclidean distance between them.

\bibliography{references}

\appendix
\section{Appendix}
\subsection{Prior Use of the Term `Min-Sum Submodular Cover'}\label{app:prior}
The first proof that a greedy algorithm for Min-Sum Submodular Cover yields a 4-approximation was given by Streeter and Golovin.  The proof appears both in a Technical Report~\cite{streeter2007online} and an associated conference paper~\cite{streetergolovin}. They actually gave their proof for a more general problem than the one we considered in this paper, 
in which $u:2^{[n] \times \R} \rightarrow \R$, and the output is a sequence of pairs of the form $(i,\tau) \in [n] \times \R$.
In their Technical Report, Streeter and Golovin used the name Min-Sum Submodular Cover to refer to the more general problem, but they did not use this name (nor any other) to refer to the problem in their conference paper. We opted to use the name Min-Sum Submodular Cover to refer to the problem we defined in this paper, as we believe this usage of the name is natural given the connection to Min-Sum Submodular Cover.  

We note that the definition Streeter and Golovin gave for the more general problem is problematic as written. 
The greedy algorithm may not be well-defined for functions $u$ that are non-zero for subsets of $[n] \times \R$
that include pairs $(i,\tau)$, where $\tau$ is infinitesimally small. However, the results in the paper are not dependent on allowing such $u$, and the problem with the definition can be fixed by restricting the domain of $u$.

\subsection{Bounding the number of rounds in the proof of Theorem~\ref{thm:4epsapprox}} \label{app:mainproof}

We show using Fact \ref{fact:improvement} that Algorithm~\ref{alg:local}
yields a $(4+\epsilon)$-approximation
in at most $O(n \log(\frac{d}{\epsilon}))$
rounds from a $d$-approximate permutation.

We introduce a recurrence relation 
$T(\ell) = a T(\ell-1) - b$
where $a=2n/(2n-1)$ and $b=4/(2n-1)$.
We can derive this recurrence by setting
$T(\ell)=M$ and $T(\ell-1)=M-(M-4)/2n$.
Intuitively, $\ell$ is the number of iterations until
we reach $(4+\epsilon)$
and so we set $T(0)=4+\epsilon$.
By repeatedly expanding $T(\ell)$, we get
\begin{align*}
    T(\ell)=-b\sum_{i=0}^{\ell-1} a^i + a^\ell (4+\epsilon) 
    =-b\frac{a^\ell-1}{a-1} + a^\ell (4+\epsilon)
    =\epsilon \left(\frac{2n}{2n-1}\right)^\ell + 4
\end{align*}
where the last equality follows by plugging
in the values of $a$ and $b$.
We claim that a $d$-approximate permutation 
is at most $2n \log(\frac{d}{\epsilon})$ rounds from a
$(4+\epsilon)$-approximation.
We can verify this by evaluating
\begin{align*}
    T\left(2n \log\left(\frac{d}
    {\epsilon}\right)\right) = \epsilon 
    \left(\frac{2n}{2n-1}\right)
    ^{2n \log(\frac{d}{\epsilon})} + 4 
    = d^{2n \log\left(\frac{2n}{2n-1}\right)}+4.
\end{align*}
Notice that
\begin{align*}
    x \log\left(\frac{x}{x-1}\right) \geq 1 \iff
    e^x \left(\frac{x}{x-1}\right) \geq e
\end{align*}
which is certainly true for $x>2$.
It follows that $T(2n \log(\frac{d}{\epsilon})) \geq d$
so local search converges in at most
$2n \log(\frac{d}{\epsilon})$ rounds.

\subsection{Generation of the Pipelined Set Cover Data Set} \label{app:babuexp}

We generate the $n$ subsets $D_1, \ldots, D_n$ of $[m]$
randomly as follows.
Initially, for each group $G$ of subsets, we generate 
an advice bit $a_{G,j}$ for each element $j \in [m]$, which
is True with probability $p$, and False with probability $1-p$.
Then for each element $j$, and for each $D_i$ in group $G$, we 
do the following: with probability $p'$ we use advice bit $a_{G,j}$ to determine whether or not to include $j$ in $D_i$ (if $a_{G,j}$ is True, we include $j$ in $D_i$, else we do not).  With probability $(1-p')$, we ignore the advice bit, and instead,
we include $j$ in $D_i$ with probability $p$, and exclude it with probability $(1-p)$.  The probability that $j$ is in a given set $D_j$ is clearly $p$.

For $i \neq i'$, if subsets $D_i$ and $D_{i'}$ are in different groups, then membership of an element $j$ in $D_i$ is clearly independent of its membership of $D_{i'}$.
If $D_i$ and $D_{i'}$ are subsets in the same group $G$, then
the probability that $j$ has the same membership status in both subsets can be calculated by noting that this can happen in two ways:
either the advice bit $a_{G,j}$ was used to determine membership in both subsets, or $a_{G,j}$ was ignored for one or both of the two subsets and membership ended up being the same in both subsets.
The first event happens with probability $p'\cdot p'$.  
The second happens with probability $(1-p'\cdot p')(p\cdot p + (1-p)\cdot (1-p))$.
Because $p=0.3$ and $p'=0.7$, the probability that the membership status of $j$ is the same for both subsets is $.7858$.

We note that it is possible that this process results in subsets $D_i$ and $D_{i'}$ where $i \neq i'$ and $D_i = D_{i'}$.  We do not eliminate such duplicates.

\end{document}